\title{A Note on Function Correcting Codes for b-Symbol Read Channels}
\author{Sachin Sampath and B. Sundar Rajan \\
Department of Electrical Communication Engineering\\
Indian Institute of Science, Bangalore}
\date{January 2025}
\newtheorem{definition}{Definition}
\newtheorem{theorem}{Theorem}
\newtheorem{corollary}{Corollary}
\begin{document}
\maketitle

\begin{abstract}
 Function-Correcting Codes (FCCs) is a novel paradigm in Error Control Coding introduced by Lenz et. al. 2023 for the binary substitution channel \cite{FCC}. FCCs aim to protect the function evaluation of data against errors instead of the data itself, thereby relaxing the redundancy requirements of the code. Later R. Premlal et. al. \cite{LFCC} gave new bounds on the optimal redundancy of FCCs and also extensively studied FCCs for linear functions. The notion of FCCs has also been extended to different channels such as symbol-pair read channel over the binary field by Xia et. al. \cite{FCSPC} and b-symbol read channel over finite fields by A.Singh et. al. \cite{FCBSC}. In this work, we study FCCs for linear functions for the b-symbol read channel. We provide the Plotkin-like bound on FCCs for b-symbol read channel which reduces to a Plotkin-like bound for FCCs for the symbol-pair read channel when $b$=2. FCCs reduce to classical Error Correcting Codes (ECCs) when the function is bijective. Analogous to this our bound reduces to the Plotkin-bound for classical ECCS for both the b-symbol and symbol-pair read channels \cite{Plotkin-b-symbol, Plotkin-symbol-pair} when we consider linear bijective functions.  
\end{abstract}

\section{Introduction}
In a standard communication system, the transmitter wishes to convey a message to the receiver through a noisy channel. Classical Error Correcting Codes (ECC) aim to counteract the noise introduced by the channel and recover the entire message correctly by adding redundancy to the message. Suppose the receiver is interested only in the function evaluation of the message instead of the message itself for some function. If the transmitter also knows the function then they can encode in such a way that only the function evaluations are protected, which relaxes the redundancy requirements of the code. This leads to a new class of codes called Function-Correcting Codes (FCC) introduced by Lenz et. al. in \cite{FCC}. The key idea here is that the receiver does not need to distinguish between different messages that evaluate to the same function value. Hence codewords assigned to these messages have no distance requirements between each other.  To this end \cite{FCC} defines FCCs, irregular distance codes, function-dependent graphs, etc which are then used to provide bounds on the optimal redundancy of FCCs. It also gives code constructions for several functions and compares the redundancy to the redundancy of classical ECCs over the data and also the function values.

Later works have extended the notion of FCCs for different channels \cite{FCSPC}, \cite{FCBSC} and for different functions \cite{LFCC}. Optimal and near optimal code constructions have also been given for some functions \cite{OFCC}. This work aims to study FCCs for b-symbol read channels for linear functions. We have derived a Plotkin-like bound for linear functions which gives the Plotkin-like bound for linear functions over symbol-pair read channels. Further restricting to linear bijective functions, we recover the Plotkin-like bounds on Classical ECCs over b-symbol read and symbol-pair read channels  \cite{Plotkin-b-symbol, Plotkin-symbol-pair} as special cases of our lower bound.
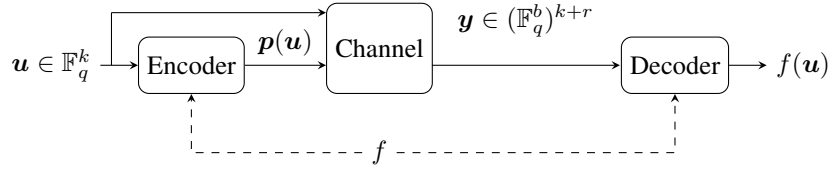
\begin{figure}[t]
	\centering
	\begin{tikzpicture}[>=stealth]
		\node (u) {$\boldsymbol{u} \in \mathbb{F}_q^k$};
		\node[right = 0.5cm of u, rectangle, rounded corners, draw, minimum height=0.75cm] (enc)              {Encoder};
		\node[draw,rectangle, rounded corners, right = 2.5 cm of enc.south, minimum height=1.2cm, anchor      = south] (channel) {Channel};
		\node[draw,rectangle, rounded corners, right = 5 cm of enc.east,minimum height=0.75cm] (decoder)      {Decoder};
		\node[below = 0.5cm of channel] (fu) {$f$};
		
		\node[right = 0.5cm of decoder](fy) {$f(\boldsymbol{u})$};
				
		\draw[->] (u) -- (enc);
		\coordinate[] (cor) at ($(u) + (0.8,0.7)$);
		\draw[-] (u -| cor) -- (cor);
		\draw[->] (cor) -- (cor -| channel.west);
		\draw[->] (enc.east) -- node [above] {$\boldsymbol{p}(\boldsymbol{u})$} (enc.east -| channel.west);
        \coordinate[] (cor2) at ($(u) + (5.06,0)$);
         \coordinate[] (cor3) at ($(u) + (7.56,0)$);
		\draw[->] (cor2) -- node [above = 0.25cm] {$\boldsymbol{y} \in (\mathbb{F}_q^b)^{k+r}$} (cor3);
		\draw[->] (decoder) -- (fy);
		\draw[->, dashed] (fu) -| (enc);
		\draw[->, dashed] (fu) -| (decoder);
	\end{tikzpicture}

	\caption{Illustration of the FCBSC setup. The input to the encoder is a message $\boldsymbol{u}$, which features an attribute $f(\boldsymbol{u})$ that is of special interest to receiver. To guarantee recoverability of this attribute, transmitter encodes the message $\boldsymbol{u}$ to a redundancy vector $\boldsymbol{p(u)}$ and transmits $(\boldsymbol{u},\boldsymbol{p(u)})$. Given an erroneous version $\boldsymbol{y}$ of the codeword $\boldsymbol{c}=(\boldsymbol{u},\boldsymbol{p})$ and the knowledge of the function $f$, the decoder can correctly infer $f(\boldsymbol{u})$.}
	\label{fig:intro_fcc_problem}
\end{figure}%
\subsection{Notation}
The set of all positive integers is denoted by $\mathbb{N}$ and the set of all non-negative integers is denoted by $\mathbb{N}_o$. The set of all positive integers less than or equal to $k$ is denoted by $[k]$. For a matrix $\boldsymbol{B}$, $[\boldsymbol{B}]_{ij}$ denotes the $(i,j)^{th}$ entry of $\boldsymbol{B}.$ For a vector $\boldsymbol{u}$ of length $n$, $\boldsymbol{u}_i$ denotes the $i^{th}$ entry of $\boldsymbol{u}.$ We write $a \equiv b(mod \ c)$ if $a-b$ is a multiple of $c.$

\section{Preliminaries}
%\begin{definition}
% \cite{FCBSC}
Let $\boldsymbol{u} = (u_0, u_1, \ldots, u_{k-1})$  be a vector in $\mathbb{F}_q^k$. Then the $b-$symbol read vector $\pi_b(\boldsymbol{u})$ of $\boldsymbol{u}$ is given by:
        \[
            \pi_b(\boldsymbol{u}) = [ (u_0, u_1, \ldots, u_{b-1}), (u_1, u_{2}, \ldots, u_{b}), \ldots , (u_{k-1}, u_0, \ldots, u_{b-2}) ].
        \]
%        \end{definition}
%
%\begin{definition}
%    \cite{FCBSC}
            For any vectors $\boldsymbol{u}, \boldsymbol{v} \in \mathbb{F}_q^k$, the $b-$symbol distance between $\boldsymbol{u}$ and $\boldsymbol{v}$ is defined as
            \[
                d_b(\boldsymbol{u},\boldsymbol{v}) = d_H(\pi_b(\boldsymbol{u}), \pi_b(\boldsymbol{v})).
            \]
%\end{definition}

     \begin{definition}
      \cite{FCBSC}
     Consider $M$ vectors $\boldsymbol{u_0}, . . . , \boldsymbol{u_{M-1}} \in {\mathbb{F}^k_q}$ and a function $f : \mathbb{F}_q^k \mapsto Im(f)$ . Then,$\ $for $t \in \mathbb{N},~ $  $\boldsymbol{B_f^{(1)}}(t, \boldsymbol{u_0}, . . . , \boldsymbol{u_{M-1}})$ and
    $\boldsymbol{B_f^{(2)}}(t, \boldsymbol{u}_0, . . . , \boldsymbol{u}_{M-1})$  are $M \times M$  $b-$symbol distance matrices with entries
 
    \begin{equation*}
        [\boldsymbol{B_f^{(1)}}(t, \boldsymbol{u}_0, . . . , \boldsymbol{u}_{M-1})]_{ij}=   
        \begin{cases}
           [2t - b + 2 - d_b(\boldsymbol{u}_i, \boldsymbol{u}_j)]^+ , & if \hspace{1mm} f (\boldsymbol{u}_i) \neq f(\boldsymbol{u}_j),\\0, & otherwise.
        \end{cases}
    \end{equation*}
    
    and 
    
    \begin{equation*}
        [\boldsymbol{B_f^{(2)}}(t, \boldsymbol{u}_0, . . . , \boldsymbol{u}_{M-1})]_{ij}=   
        \begin{cases}
           [2t + b - d_b(\boldsymbol{u}_i, \boldsymbol{u}_j)]^+ , & if \hspace{1mm} f (\boldsymbol{u}_i) \neq f(\boldsymbol{u}_j),\\0, & otherwise.
        \end{cases}
    \end{equation*}
    \end{definition}

If  $M = q^k$ then we denote  $\boldsymbol{B_f^{(1)}}(t, \boldsymbol{u}_0, . . . , \boldsymbol{u}_{M-1})$ and $ \boldsymbol{B_f^{(2)}}(t, \boldsymbol{u}_0, . . . , \boldsymbol{u}_{M-1})$ by $ \boldsymbol{B_f^{(1)}}(t) $ and $ \boldsymbol{B_f^{(2)}}(t) $ respectively.

\begin{definition}
\cite{FCBSC}
A set of codewords $\mathcal{P} = \{ \boldsymbol{p_0}, \ldots, \boldsymbol{p_{M-1}} \}$ is said to be a $\boldsymbol{B}$-irregular $b-$symbol distance code ($\boldsymbol{B_b}$-code) for some matrix $\boldsymbol{B}$ $\in \mathbb{N}_0^{M \times M}$ if there exists an ordering of the codewords in $\mathcal{P}$ such that $d_b(\boldsymbol{p_i},\boldsymbol{p_j}) \geq$ $\boldsymbol{[B]_{ij}}$ for all $i,j \in \{0, 1, \ldots, M-1\}.$ 
    \end{definition}

The smallest length $r$ for which there exists a $\boldsymbol{B}_b$-code of length $r$ is denoted by $N_b(\boldsymbol{B}).$ If all the non-diagonal entries of $\boldsymbol{B}$ are equal to a constant $D$ then we denote $N_b(\boldsymbol{B})$ by $N_b(M,D).$  
%The following Theorem 1 gives bounds on $N_b(\boldsymbol{B}).$          

% \begin{theorem}\label{thm: plotkin}
% \cite{FCBSC}
%     For any matrix $\boldsymbol{B} \in \mathbb{N}_0^{M \times M}$, we have
%     \[
%         N_b(\boldsymbol{B}) \ge
%         \begin{cases}
%         \frac{2q^b}{(q^b - 1)M^2} \sum_{i,j, \, i<j} [\boldsymbol{B}]_{ij}, & \text{if }  M \equiv 0 \pmod{q^b}, \\[10pt]
%         \frac{2q^b}{(q^b - 1)(M^2 - 1)} \sum_{i,j, \, i<j} [\boldsymbol{B}]_{ij}, & \text{if } M \equiv 1 \pmod{q^b}, \\[10pt]
%         \frac{2q^b}{M^2(q^b - 1) - m(q^b - m)} \sum_{i,j, \, i<j} [\boldsymbol{B}]_{ij}, & \text{if } M \equiv m \pmod{q^b},\text{ } m>1
%         \end{cases}
%     \]
% \end{theorem}

Function-correcting codes for the b-symbol read channels are defined as follows. 
\begin{definition}
\cite{FCBSC}
An encoding function, $ Enc: {\mathbb{F}^k_q} \rightarrow {\mathbb{F}_q^{k+r}},  Enc(\boldsymbol{u}) = (\boldsymbol{u},p(\boldsymbol{u}))$ defines a \textbf{function correcting $b$-symbol} code  (FCBSC) for the function $f: \mathbb{F}^k_q \rightarrow Im(f)$  if 
    \begin{equation*}  
        d_b(Enc(\boldsymbol{u_1}), Enc(\boldsymbol{u_2})) \geq 2t+1,  \hspace{3mm}\forall~ \boldsymbol{u_1}, \boldsymbol{u_2} \in {\mathbb{F}^k_q} \text{ with  } f(\boldsymbol{u_1}) \neq f(\boldsymbol{u_2})  . \\
    \end{equation*}
\end{definition}

Fig.1 illustrates the setup for FCBSCs. We will denote an FCBSC for a given function $f:\mathbb{F}_q^k \mapsto Im(f)$ and parameter $t$ by $(f,t)$-FCBSC. The smallest value of $r$ for which an $(f,t)$-FCBSC exists is called its {optimal redundancy} and is denoted by  $r_b^f(k,t).$ The optimal redundancy is bounded as follows,
\begin{theorem}
\cite{FCBSC}\label{FCBSC bound}
For any function $f: \mathbb{F}^k_q \rightarrow Im(f)$ and $\{\boldsymbol{u_1}, \ldots, \boldsymbol{u_{q^k}}\} = \mathbb{F}^k_q$, we have 
    \[
        N_b(\boldsymbol{B}_f^{(1)}(t)) \leq r_b^f(k,t) \leq N_b(\boldsymbol{B}_f^{(2)}(t)).
    \]
\end{theorem}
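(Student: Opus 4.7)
The plan is to derive both bounds from a single concatenation inequality that controls how the $b$-symbol distance of a length-$(k+r)$ codeword $\boldsymbol{c}=(\boldsymbol{u},\boldsymbol{p})$ relates to those of its $\boldsymbol{u}$- and $\boldsymbol{p}$-blocks. Specifically, I would first establish: for all $\boldsymbol{u_1},\boldsymbol{u_2}\in\mathbb{F}_q^k$, $\boldsymbol{p_1},\boldsymbol{p_2}\in\mathbb{F}_q^r$, writing $\boldsymbol{c_i}=(\boldsymbol{u_i},\boldsymbol{p_i})$,
\[
d_b(\boldsymbol{u_1},\boldsymbol{u_2}) + d_b(\boldsymbol{p_1},\boldsymbol{p_2}) - (b-1) \;\leq\; d_b(\boldsymbol{c_1},\boldsymbol{c_2}) \;\leq\; d_b(\boldsymbol{u_1},\boldsymbol{u_2}) + d_b(\boldsymbol{p_1},\boldsymbol{p_2}) + (b-1).
\]
The slack $(b-1)$ is precisely what lets the two matrix definitions $\boldsymbol{B_f^{(1)}}(t)$ and $\boldsymbol{B_f^{(2)}}(t)$ differ by $2(b-1)$ in their entries.

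For the upper bound $r_b^f(k,t)\leq N_b(\boldsymbol{B_f^{(2)}}(t))$, fix $r=N_b(\boldsymbol{B_f^{(2)}}(t))$ and a $\boldsymbol{B_f^{(2)}}(t)$-code $\{\boldsymbol{p_0},\ldots,\boldsymbol{p_{q^k-1}}\}$ of length $r$ aligned to the ordering $\boldsymbol{u_0},\ldots,\boldsymbol{u_{q^k-1}}$ so that $d_b(\boldsymbol{p_i},\boldsymbol{p_j})\geq [2t+b-d_b(\boldsymbol{u_i},\boldsymbol{u_j})]^+$ whenever $f(\boldsymbol{u_i})\neq f(\boldsymbol{u_j})$. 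Defining $Enc(\boldsymbol{u_i})=(\boldsymbol{u_i},\boldsymbol{p_i})$, the left half of the concatenation inequality combined with the $\boldsymbol{B_f^{(2)}}(t)$ constraint gives $d_b(Enc(\boldsymbol{u_i}),Enc(\boldsymbol{u_j}))\geq 2t+1$, once in the regime $d_b(\boldsymbol{u_i},\boldsymbol{u_j})\leq 2t+b$ and once (using just $d_b(\boldsymbol{p_i},\boldsymbol{p_j})\geq 0$) in the regime $d_b(\boldsymbol{u_i},\boldsymbol{u_j})>2t+b$. This exhibits an $(f,t)$-FCBSC of redundancy $r$.

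For the lower bound $N_b(\boldsymbol{B_f^{(1)}}(t))\leq r_b^f(k,t)$, I would start from any optimal $(f,t)$-FCBSC $Enc(\boldsymbol{u})=(\boldsymbol{u},\boldsymbol{p}(\boldsymbol{u}))$ and take $\mathcal{P}=\{\boldsymbol{p}(\boldsymbol{u_0}),\ldots,\boldsymbol{p}(\boldsymbol{u_{q^k-1}})\}$ in the same order. For $f(\boldsymbol{u_i})\neq f(\boldsymbol{u_j})$, the FCBSC condition $d_b(Enc(\boldsymbol{u_i}),Enc(\boldsymbol{u_j}))\geq 2t+1$ combined with the right half of the concatenation inequality yields $d_b(\boldsymbol{p}(\boldsymbol{u_i}),\boldsymbol{p}(\boldsymbol{u_j}))\geq 2t-b+2-d_b(\boldsymbol{u_i},\boldsymbol{u_j})$, and non-negativity of $d_b$ lets me apply $[\cdot]^+$. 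Hence $\mathcal{P}$ is a $\boldsymbol{B_f^{(1)}}(t)$-code of length $r_b^f(k,t)$, giving the claimed bound.

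The main obstacle is proving the concatenation inequality, since $d_b$ does not split cleanly over concatenations due to the cyclic wrap of $\pi_b(\cdot)$. The plan there is to partition the $k+r$ cyclic $b$-windows of $\pi_b(\boldsymbol{c})$ into four groups: the $k-b+1$ interior windows of the $\boldsymbol{u}$-block (identical to the non-wrap windows of $\pi_b(\boldsymbol{u})$), the $r-b+1$ interior windows of the $\boldsymbol{p}$-block, the $b-1$ windows straddling the $\boldsymbol{u}$-to-$\boldsymbol{p}$ join, and the $b-1$ windows wrapping from $\boldsymbol{p}$ back to $\boldsymbol{u}$. Symmetrically, $\pi_b(\boldsymbol{u})$ and $\pi_b(\boldsymbol{p})$ contribute $b-1$ wrap-around windows each. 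The difference $d_b(\boldsymbol{c_1},\boldsymbol{c_2}) - d_b(\boldsymbol{u_1},\boldsymbol{u_2}) - d_b(\boldsymbol{p_1},\boldsymbol{p_2})$ therefore equals the net mismatch count over the $2(b-1)$ boundary windows of $\boldsymbol{c}$ minus that over the $2(b-1)$ wrap-around windows of $\boldsymbol{u}$ and $\boldsymbol{p}$. A position-level accounting of which indices in $\mathrm{supp}(\boldsymbol{c_1}-\boldsymbol{c_2})$ lie within the last $b-1$ coordinates of $\boldsymbol{u}$ and the first $b-1$ of $\boldsymbol{p}$ (and the analogous indices at the wrap) shows this net contribution has absolute value at most $b-1$. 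Everything else in the proof is straightforward application of the definitions of $\boldsymbol{B_f^{(1)}}(t)$, $\boldsymbol{B_f^{(2)}}(t)$, and $N_b(\cdot)$.
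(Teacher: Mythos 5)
Your proposal is correct, but note that this paper does not prove the theorem at all --- it is quoted from \cite{FCBSC} --- so there is no internal proof to compare against; your argument is essentially the standard one underlying that reference (a concatenation lemma with slack $b-1$, from which the two matrix definitions $\boldsymbol{B_f^{(1)}}(t)$ and $\boldsymbol{B_f^{(2)}}(t)$ fall out). The one step you must execute carefully is the concatenation inequality itself. Comparing the $2(b-1)$ boundary windows of $\boldsymbol{c}$ (the $b-1$ straddle windows plus the $b-1$ cyclic wrap windows) term-by-term against the $2(b-1)$ wrap-around windows of $\boldsymbol{u}$ and $\boldsymbol{p}$ only yields slack $2(b-1)$, and that is not enough: with slack $2(b-1)$ the lower-bound direction would only certify $d_b(\boldsymbol{p}(\boldsymbol{u_i}),\boldsymbol{p}(\boldsymbol{u_j})) \ge 2t-2b+3-d_b(\boldsymbol{u_i},\boldsymbol{u_j})$, which is strictly weaker than the $\boldsymbol{B_f^{(1)}}(t)$ entries for $b\ge 2$. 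To get $b-1$ you need a cancellation: for each $j\in\{1,\dots,b-1\}$, pair the straddle window of $\boldsymbol{c}$ starting at $k-j$ with the wrap window of $\boldsymbol{u}$ starting at $k-j$, and the wrap window of $\boldsymbol{c}$ starting at $k+r-j$ with the wrap window of $\boldsymbol{p}$ starting at $r-j$. Writing $A,B$ for the indicators that the length-$j$ suffix, respectively length-$(b-j)$ prefix, of the $\boldsymbol{u}$-block contains a differing coordinate, and $E,C$ for the analogous indicators for the $\boldsymbol{p}$-block, the four windows are dirty exactly when $A\lor C$, $A\lor B$, $E\lor B$, $E\lor C$ hold, and the signed sum $[A\lor C]-[A\lor B]+[E\lor B]-[E\lor C]$ always lies in $\{-1,0,1\}$ (it vanishes unless $B\neq C$, in which case it reduces to $\pm([A]-[E])$). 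Summing over $j$ gives total discrepancy at most $b-1$ in absolute value, as you claimed. With that lemma established, your two directions (building an encoder from a $\boldsymbol{B_f^{(2)}}(t)$-code, and extracting a $\boldsymbol{B_f^{(1)}}(t)$-code of length $r_b^f(k,t)$ from an optimal encoder) go through exactly as written; you do implicitly assume $k\ge b$ and $r\ge b$ so that the window partition makes sense, which is the standard regime.
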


\section{FCBSCs for Linear Functions}
	\label{defn:linear function}

%    \begin{definition}
%   \cite{LFCC}
    A function $f : \mathbb{F}_{q}^{k} \to \mathbb{F}_{q}^{l} $ is said to be linear if it satisfies the following condition: $$f(\alpha \boldsymbol{u} + \beta \boldsymbol{y}) = \alpha f(\boldsymbol{u}) +\beta f(\boldsymbol{y}), \forall \boldsymbol{u},\boldsymbol{y} \in \mathbb{F}_{q}^{k} \ and \ \alpha , \beta \in  \mathbb{F}_{q}.$$ 
        
%   \end{definition}
    
\noindent Such a function can always be expressed as a matrix operation $f(\boldsymbol{u}) = \boldsymbol{F}\boldsymbol{u}, \text{ for some }\boldsymbol{F} \in \mathbb{F}_{q}^{l \times k}.$ The kernel of $f$ or the null space of $\boldsymbol{F}$ is denoted by $ker(f)$. We will only consider linear functions for which $l \le k$ and $\boldsymbol{F}$ is full rank, i.e., $rank_{\mathbb{F}_{q}}(\boldsymbol{F})=l$.
We denote the range of $f$ as $Im(f) = \mathbb{F}_{q}^{l} \triangleq \{f_{0},f_{1},\hdots f_{q^{l}-1}\}$.

\subsection{Plotkin-Bound for FCBSCs for linear functions}
\begin{corollary}\label{Cor 1}
For a linear function $f:\mathbb{F}_{q}^{k} \to \mathbb{F}_{q}^{l}$, the optimal redundancy of a $t$-FCBSC is lower bounded as
	\begin{equation}
		\label{requiredbound}
		r_b^f(k,t) \ge \left(\frac{q^{b}}{q^{b}-1}\right)(2t-b+2)(1-q^{-l})-k+\left(\frac{q^{b}}{q^{b}-1}\right)\left(\frac{s}{q^{k}}\right),
	\end{equation}  
	where $s = \sum_{\boldsymbol{u}\in ker(f)}w_{b}(\boldsymbol{u}) $ i.e, the sum of Hamming weights of the vectors in $ker(f)$ and the \textbf{b-weight} of a vector $\boldsymbol{u}$ is defined as $w_b(\boldsymbol{u}) \triangleq w_{H}(\pi_b(\boldsymbol{u})).$
\end{corollary}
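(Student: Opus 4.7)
The plan is to combine the lower bound $r_b^f(k,t) \ge N_b(\boldsymbol{B}_f^{(1)}(t))$ from Theorem~\ref{FCBSC bound} with a double-counting argument on a minimum-length $\boldsymbol{B}_f^{(1)}(t)$-code. Let $\mathcal{P} = \{\boldsymbol{p}_0, \ldots, \boldsymbol{p}_{M-1}\}$ be a $\boldsymbol{B}_f^{(1)}(t)$-code of length $r = N_b(\boldsymbol{B}_f^{(1)}(t))$ associated with the enumeration $\{\boldsymbol{u}_0, \ldots, \boldsymbol{u}_{M-1}\} = \mathbb{F}_q^k$, with $M = q^k$. For every ordered pair $(i,j)$ with $f(\boldsymbol{u}_i) \ne f(\boldsymbol{u}_j)$, the defining inequality $d_b(\boldsymbol{p}_i, \boldsymbol{p}_j) \ge [2t - b + 2 - d_b(\boldsymbol{u}_i, \boldsymbol{u}_j)]^+$ implies, regardless of whether the positive part is active, the uniform estimate
\begin{equation*}
    d_b(\boldsymbol{p}_i, \boldsymbol{p}_j) + d_b(\boldsymbol{u}_i, \boldsymbol{u}_j) \ge 2t - b + 2 .
\end{equation*}

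Summing this over $\Delta = \{(i,j) : f(\boldsymbol{u}_i) \ne f(\boldsymbol{u}_j)\}$, I would first use linearity and full rank of $\boldsymbol{F}$ to identify the fibres of $f$ as cosets of $\ker(f)$ of size $q^{k-l}$, so that $|\Delta| = q^{2k}(1 - q^{-l})$. Next I would upper-bound the two sums on the left. For $\sum_{(i,j)\in\Delta} d_b(\boldsymbol{p}_i, \boldsymbol{p}_j)$ I would drop the restriction and apply a standard Plotkin-style count: at each of the $r$ coordinates of $\mathcal{P}$ the $b$-tuples of $\pi_b(\boldsymbol{p}_i)$ take values in $\mathbb{F}_q^b$, and a convexity (Cauchy--Schwarz) argument bounds the number of ordered pairs differing at a given coordinate by $M^2(1 - q^{-b})$, giving the aggregate bound $r\,M^2(q^b - 1)/q^b$.

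For $\sum_{(i,j)\in\Delta} d_b(\boldsymbol{u}_i, \boldsymbol{u}_j)$ I would use translation invariance, $d_b(\boldsymbol{u}_i,\boldsymbol{u}_j) = w_b(\boldsymbol{u}_i - \boldsymbol{u}_j)$, so that the unrestricted sum equals $q^k\sum_{\boldsymbol{v}\in \mathbb{F}_q^k} w_b(\boldsymbol{v}) = k\,q^{2k}(1 - q^{-b})$, the last identity following because each coordinate of the cyclic $b$-read of a uniformly random vector in $\mathbb{F}_q^k$ is a uniform $b$-tuple and hence nonzero with probability $1-q^{-b}$. Subtracting the contribution of the complementary set $\{(i,j) : f(\boldsymbol{u}_i) = f(\boldsymbol{u}_j)\}$, whose pairwise differences sweep $\ker(f)$ with multiplicity $q^k$, yields
\begin{equation*}
    \sum_{(i,j) \in \Delta} d_b(\boldsymbol{u}_i, \boldsymbol{u}_j) = k\,q^{2k}(1 - q^{-b}) - q^k\,s .
\end{equation*}

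Finally, plugging both estimates into $(2t - b + 2)|\Delta| \le \sum_{(i,j) \in \Delta}[d_b(\boldsymbol{p}_i, \boldsymbol{p}_j) + d_b(\boldsymbol{u}_i, \boldsymbol{u}_j)]$ and isolating $r$ by dividing through by $q^{2k}(q^b-1)/q^b$ delivers the claimed inequality. The main obstacle, in my view, is identifying the correct orbit-counting identity for $\sum_{(i,j)\in\Delta} d_b(\boldsymbol{u}_i, \boldsymbol{u}_j)$: the $s/q^k$ term in the bound must materialize as exactly the $\ker(f)$ contribution that is removed, and this is where linearity of $f$ and translation invariance of the $b$-weight interact. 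The Plotkin counting for $b$-tuples, the expectation computation of $w_b$ on a uniform vector, and the handling of the $[\,\cdot\,]^+$ in the definition of $\boldsymbol{B}_f^{(1)}(t)$ are comparatively routine once the orbit picture is in place.
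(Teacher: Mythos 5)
Your proposal is correct and follows essentially the same route as the paper: the lower bound of Theorem~\ref{FCBSC bound}, the Plotkin-type count $\sum_{i,j} d_b(\boldsymbol{p}_i,\boldsymbol{p}_j) \le r\,M^2(1-q^{-b})$ on the redundancy code, translation invariance $d_b(\boldsymbol{u}_i,\boldsymbol{u}_j) = w_b(\boldsymbol{u}_i-\boldsymbol{u}_j)$, and the total $b$-weight identity $\sum_{\boldsymbol{v}} w_b(\boldsymbol{v}) = k(q^k-q^{k-b})$. The only (immaterial) difference is bookkeeping: you sum the inequality directly over all ordered pairs in $\Delta$, whereas the paper first observes that the columns of $\boldsymbol{B}_f^{(1)}(t)$ are permutations of one another and then bounds a single column sum before multiplying by $q^k$.
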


\begin{proof}
    Consider the set of all vectors in $\mathbb{F}_{q}^{k}$. For convenience, let $\boldsymbol{B} =  \boldsymbol{B_f^{(1)}}(t) $. Let $\mathcal{P} = \{ \boldsymbol{p_0},\boldsymbol{p_1} \ldots, \boldsymbol{p_{M-1}} \},$ with $M=q^k,$  be a $\boldsymbol{B_b}$-code of length $r$. Then $d_b(\boldsymbol{p_i},\boldsymbol{p_j}) \geq$ $\boldsymbol{[B]_{ij}}$ for all $i,j \in \{0, 1, \ldots, M-1\}.$ Summing over all $i$ and $j$, we get 
 \begin{equation}
		\label{plotkin 1}
		\sum_{i, j}^{}d_b(\boldsymbol{p}_{i},\boldsymbol{p}_{j}) \ge \sum_{i, j}^{}[\boldsymbol{B}]_{ij}.
	\end{equation}
Since $rank_{\mathbb{F}_{q}}(\boldsymbol{F})=l$, dim$_{\mathbb{F}_{q}}(ker(f)) = k-l$. Thus each coset of $ker(f)$ will contain $q^{k-l}$ elements. If $\boldsymbol{u}_i, \boldsymbol{u}_j \in \mathbb{F}_{q}^{k}$ are in the same coset of $ker(f),$ then $f(\boldsymbol{u}_i) = f(\boldsymbol{u}_j)$ and hence $ \boldsymbol{[B]_{ij}} = 0.$ Thus there will be at least $q^{k-l}$ $0$s in each column.

Suppose $\boldsymbol{u}_i, \boldsymbol{u}_j$ are in different cosets of $ker(f)$. Then 
$$\boldsymbol{[B]_{ij}} = [2t-b+2-d_b(\boldsymbol{u}_i, \boldsymbol{u}_j) ]^+. $$ Consider any other column $j'$ in $\boldsymbol{B} , j' \neq j$. Let $\boldsymbol{v}=\boldsymbol{u}_{j'}-\boldsymbol{u}_j.$ Let $i'$ be the row in $\boldsymbol{B}$ such that 
$\boldsymbol{u}_{i'} = \boldsymbol{u}_i + \boldsymbol{v}.$
Since $\pi_b(.)$ is an isomorphism from $\mathbb{F}_{q}^{k}$ to the space of all b-symbol read vectors of the vectors in $\mathbb{F}_{q}^{k}.$
\begin{align*}
     d_b(\boldsymbol{u}_{i'},\boldsymbol{u}_{j'}) &= 
     d_H(\pi_b(\boldsymbol{u}_{i'}),\pi_b(\boldsymbol{u}_{j'}) \\
&=w_b(\pi_b(\boldsymbol{u}_{j'})-\pi_b(\boldsymbol{u}_{i'})) \\
&=w_b(\pi_b(\boldsymbol{u}_{j'}-\boldsymbol{u}_{i'})) \\
&=w_b(\pi_b(\boldsymbol{u}_{j'}-\boldsymbol{u}_{i}-\boldsymbol{v})) \\
&=w_b(\pi_b(\boldsymbol{u}_{j'}-\boldsymbol{u}_{i}-(\boldsymbol{u}_{j'}-\boldsymbol{u_{j}}))) \\
&=w_b(\pi_b(\boldsymbol{u}_{j}-\boldsymbol{u}_{i})) \\
&=w_b(\pi_b(\boldsymbol{u}_{j})-\pi_b(\boldsymbol{u}_{i})) \\
& =d_H(\pi_b(\boldsymbol{u}_{i}),\pi_b(\boldsymbol{u}_{j})) \\
 &= d_b(\boldsymbol{u}_i, \boldsymbol{u}_j).
\end{align*}

Therefore $$\boldsymbol{[B]_{i'j'}} = [2t-b+2-d_b(\boldsymbol{u}_i, \boldsymbol{u}_j) ]^+. $$
Thus for every column in $\boldsymbol{B}$ each entry in the column appears in every other column. The columns of $\boldsymbol{B}$ are permutations of each other and we can write
\begin{align}
		\sum_{i, j}^{}[\mathbf{B}]_{ij} &= {\text{(no. of columns)} \times \text{(sum of one column)}}. \nonumber
\end{align}

Without loss of generality let $u_0 = \boldsymbol{0}.$ Let $I$ be the set of all vectors not in the kernel of $f$.$\forall i \in I$ We have
\begin{align*}
		[\mathbf{B}]_{i0} &\ge 2t-b+2-d_{b}(\mathbf{u}_{i},\mathbf{0})\\
		&= 2t-b+2 - w_{b}(\mathbf{u}_{i}).
        \\
       \mbox{and~~~~~~~} \sum_{i}^{} [\mathbf{B}]_{i0} &\ge (2t-b+2)(q^{k}-q^{k-l}) - \sum_{i \in I}^{} w_{b}(\mathbf{u}_{i}).
	\end{align*}
	
    The sum\footnote{For any $\boldsymbol{u} \in \mathbb{F}_q^k$ and index $i \in [k]$ the $i^{th}$ entry of its b-symbol read vector , $\pi_b(\boldsymbol{u})_i = (\boldsymbol{u}_i, \boldsymbol{u}_{i+1}, \ldots \boldsymbol{u}_{i+b-1})$ contributes a 1 to $\sum_{\boldsymbol{u} \in \mathbb{F}_{q}^{k} }^{} \boldsymbol{u}$ if $\pi_b(\boldsymbol{u})_i \neq \boldsymbol{0}$. If $\pi_b(\boldsymbol{u})_i = \boldsymbol{0}$ the remaining $k-b$ symbols can each take any of $q$ values. The number of vectors for which $\pi_b(\boldsymbol{u})_i \neq \boldsymbol{0}$ is thus $q^k -q^{k-b}$. This holds for all $i \in [k]$ $\implies \sum_{\boldsymbol{u} \in \mathbb{F}_{q}^{k} }^{} \boldsymbol{u} = k(q^k-q^{k-b})$ } of $b$-weights of all the vectors in $\mathbb{F}_{q}^{k}$ can be found to be $k(q^{k}-q^{k-b})$ for $k \geq b$ . Denoting the sum of $b$-weights of the vectors in $ker(f)$ by $s$, we have

\begin{align}
        \sum_{i}^{} [\mathbf{B}_{i0}] & \ge (2t-b+2)(q^{k}-q^{k-l}) - k(q^{k}-q^{k-b}) + s, \nonumber    
 \intertext{Thus} \nonumber
        \sum_{i.j}^{}[\mathbf{B}]_{ij} & \ge q^{k} \times \sum_{i}^{} [\mathbf{B}]_{i0} \nonumber\\
	&= {q^{k}((2t-b+2)(q^{k}-q^{k-l}) - k(q^{k}-q^{k-b}) + s)}\label{plotkin 2}.
\end{align}
For $k \geq b $, $ q^k \equiv0 (\text{mod} \ q^b) .$ From  the proof of \textbf{Lemma 4.1} in \cite{FCBSC} we have 
\begin{align}
    r \times M^2(1-q^{-b}) & \geq \sum_{i, j}^{}d_b(\boldsymbol{p}_{i},\boldsymbol{p}_{j}) \nonumber \\
    r & \geq \frac{\sum_{i, j}^{}d_b(\boldsymbol{p}_{i},\boldsymbol{p}_{j})}{q^{2k}(1-q^{-b})} \label{plotkin 3}
\end{align}
From \eqref{plotkin 1}, \eqref{plotkin 2}, \eqref{plotkin 3} and Theorem \ref{FCBSC bound} we get the required bound \eqref{requiredbound}.
\end{proof}

\subsection{Special Cases}
As  a special case, if we take $b=2$, we get the Plotkin bound for linear function-correcting codes for symbol-pair read channels over $\mathbb{F}_q$ to be 
\begin{equation*}
		r_2^f(k,t) \ge \left(\frac{q^{2}}{q^{2}-1}\right)(2t)(1-q^{-l})-k+\left(\frac{q^{2}}{q^{2}-1}\right)\left(\frac{s}{q^{k}}\right).
\end{equation*}  
Further, taking $q=2 \text{ and } b=2$ we get the Plotkin bound for function-correcting symbol pair codes (FCSPCs) defined in \cite{FCSPC}
\begin{equation}\label{plotkin 4}
		r_2^f(k,t) \ge \left(\frac{8t}{3}\right)(1-2^{-l})-k+\frac{s}{3.2^{k-2}}.
\end{equation} 
Taking $b=1$ we get
\begin{equation}\label{plotkin 5}
		r_{f}(k, t) \ge \left(\frac{q}{q-1}\right)(2t+1)(1-q^{-l})-k+\frac{s}{(q-1)(q^{k-1})},
	\end{equation}
which is exactly the Plotkin-like bound for linear function-correcting codes given in \cite{LFCC}
\subsection{Recovering Plotkin-like-Bounds for ECCs} 
If $f$ is a bijective linear function then $l=k$ and $s=0$. Substituting these in Corollary \ref{Cor 1} we get
\begin{equation*}
    n \triangleq k + r_b^f(k,t) \ge \left(\frac{q^{b}}{q^{b}-1}\right)(2t-b+2)(1-q^{-k}),
\end{equation*}
which matches the Plotkin-like bound for ECCs for b-symbol read channels given in \cite{Plotkin-b-symbol}.\\
Substituting $l=k$ and $s=0$ in equation \eqref{plotkin 4} we get 
\begin{equation*}
    n \triangleq k + r_2^f(k,t) \geq \left(\frac{8t}{3}\right)(1-2^{-k}),
\end{equation*}
which matches the Plotkin-like bound for ECCs for the symbol-pair read channel given in \cite{Plotkin-symbol-pair}.
Making the same substitutions in equation \eqref{plotkin 5} we recover the Plotkin bound for ECCs,
\begin{equation*}
		n \triangleq k+r_{f}(k, t) \ge \left(\frac{q}{q-1}\right)(2t+1)(1-q^{-k}).
\end{equation*}

\end{document}